\def\draft{1}
\newcommand{\YSHI}[1]{\ifthenelse{\equal{\draft}{1}}{{\color{red}{#1}}}{#1}}
\newtheorem{theorem}{Theorem}
\newtheorem{proposition}[theorem]{Proposition}
\newtheorem{definition}[theorem]{Definition}
\newtheorem{lemma}[theorem]{Lemma}
\newtheorem{corollary}[theorem]{Corollary}
\newenvironment{proofof}[1]{\begin{trivlist} \item {\bf Proof #1:~~}}{\qed\end{trivlist}}
\newcommand{\commentout}[1]{}
\def\Tr{\textnormal{Tr}}
\def\Supp{\textnormal{Supp }}
\begin{document}

\title{Certifying the absence of quantum nonlocality}

\author{Carl A.~Miller}
\email{carlmi@umich.edu}
\affiliation{Department of Electrical Engineering and Computer Science, University of Michigan, Ann Arbor, MI  48109, USA}

\author{Yaoyun Shi}
\email{shiyy@umich.edu}
\affiliation{Dept.~of Electrical Engineering and Computer Science, University of Michigan, Ann Arbor, MI  48109, USA}

\date{\today}

\begin{abstract}
\noindent
Quantum nonlocality is an inherently non-classical feature of quantum mechanics and manifests itself through
violation of Bell inequalities for nonlocal games. We show that in a fairly general setting, a simple extension of a nonlocal game can certify instead the {\em absence} of quantum nonlocality. Through contraposition, our result implies that  a super-classical performance for such a game ensures that a player's output is unpredictable to the other player. Previously such output unpredictability was known  with respect to a third party.
\end{abstract}


\maketitle

\paragraph*{Introduction.}  One of the most central and counterintuitive
aspects of quantum information theory is the ability for quantum players
to outperform classical players at nonlocal games.  In a nonlocal game for two players Alice
and Bob, they are given inputs
$a$ and $b$, respectively, and they produce outputs $x$ and
$y$.  The input pairs $(a, b)$ are drawn according to a fixed
distribution, and a scoring function is applied to the joint input-output tuple $(a,b,x,y)$.
When Alice and Bob use a classical strategy, they share a random variable $r$ independent of the inputs,
and decide their output deterministically from $r$ and their input.
In a quantum strategy, they share an entangled state and apply a local measurement determined
by their input. A {\em Bell inequality} upper-bounds the maximum score that a classical strategy
can achieve.
There are multi-player games for which an expected score can be achieved by quantum players
that is higher than that which can be achieved by any classical
or deterministic player. Such a violation of Bell inequality is referred to as quantum nonlocality (see \cite{Brunner:2014} for a survey of
this phenomenon).  

We ask the question: is there any way to certify the {\em absence} of quantum nonlocality?  This question needs to be more precisely formulated,
as otherwise it may appear trivially impossible. For example,
when no Bell inequality is violated, we cannot conclude that Alice and Bob did
not employ a quantum strategy. They could in principle still make use of quantum entanglement.
For example, they could measure the same observable on a maximum entangled state to produce outputs that are always anti-correlated. This input-output correlation is clearly classical yet the process is (arguably) quantum.

In this work, we call a quantum strategy {\em essentially classical}
if it is equivalent, in a sense to be made precise, to one in which all the observables of one player
commute with the shared quantum state. It appears natural to conclude that
quantum nonlocality is absent in an essentially classical strategy. Under this intepretation,
we show that the following simple extension of a nonlocal game can indeed
certify the absence of quantum nonlocality:
after the nonlocal game is played, we give Alice's input $a$ to Bob and ask him to guess what Alice's output was.
Call this second task the guessing game.
Our main theorem, stated informally, is the following. Let $A$ be Alice's local system.
\begin{theorem}[Informal] If Bob succeeds with certainty in the guessing game,  there is an isometry mapping Bob's system to $B'\otimes A'$ such that Bob's strategy for the nonlocal game involves only $B'$ and all Alice's observables commute with the reduced state on $AB'$.  Consequently, the input-output correlation is classical.
\end{theorem}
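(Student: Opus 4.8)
The plan is to reason about a purified strategy: a shared pure state $|\psi\rangle\in\mathcal H_A\otimes\mathcal H_B$, Alice's measurements $\{M^a_x\}_x$ on input $a$, Bob's measurements $\{N^b_y\}_y$ in the nonlocal game on input $b$, and Bob's guessing measurements $\{\Pi^{a,b,y}_x\}_x$ applied after he has obtained the nonlocal-game outcome $y$ and learned $a$. By Naimark dilation and by restricting $\mathcal H_B$ to the support of $\rho_B:=\Tr_A|\psi\rangle\langle\psi|$ I may assume every measurement is projective and that $\rho_B$ has full rank, folding these reductions into the final isometry. The first step is to extract a \emph{mirror identity}. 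Fix an input $b_0$; Bob's effective POVM for producing the guess $x$ is $G^a_x:=\sum_y N^{b_0}_y\Pi^{a,b_0,y}_x N^{b_0}_y$, and perfect guessing on $(a,b_0)$ reads $\sum_x\langle\psi|M^a_x\otimes G^a_x|\psi\rangle=1$. Since $I-\sum_x M^a_x\otimes G^a_x=\sum_{x\neq x'}M^a_x\otimes G^a_{x'}\geq 0$ for any pair of POVMs, $|\psi\rangle$ lies in the eigenvalue-one eigenspace, the cross terms $(M^a_x\otimes G^a_{x'})|\psi\rangle$ vanish for $x\neq x'$, and one obtains $(M^a_x\otimes I)|\psi\rangle=(I\otimes P^a_x)|\psi\rangle$ with $P^a_x:=G^a_x$. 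A short computation with this identity shows that on the support of $\rho_B$ the $\{P^a_x\}_x$ are orthogonal projections summing to the identity, and moreover each commutes with $\rho_B$.

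The crux --- and the step I expect to be the main obstacle --- is to show that Bob's nonlocal-game measurements commute with the mirror operators: $[P^a_x,N^b_y]=0$ on the support of $\rho_B$, for all $a,b,x,y$. This is precisely where the sequential structure of the protocol enters. For fixed $b,y$, Bob's residual state conditioned on Alice's outcome $x$ and his own outcome $y$ is $\sigma^{a,b,y}_x=\Tr_A[(M^a_x\otimes N^b_y)|\psi\rangle\langle\psi|(M^a_x\otimes N^b_y)]$, which by the mirror identity equals $N^b_y P^a_x\rho_B P^a_x N^b_y$. Because Bob recovers $x$ with certainty in this branch, the $\sigma^{a,b,y}_x$ for distinct $x$ have orthogonal supports, so $\sigma^{a,b,y}_x\sigma^{a,b,y}_{x'}=0$; expanding this, taking a trace, using cyclicity together with $(N^b_y)^2=N^b_y$, and then the full rank of $\rho_B$, I can isolate a term $W\rho_B W^\dagger$ with $W=P^a_x N^b_y P^a_{x'}$ and conclude $P^a_x N^b_y P^a_{x'}=0$ for $x\neq x'$. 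Summing over $x'$ gives $P^a_x N^b_y=P^a_x N^b_y P^a_x$, and comparison with the adjoint yields the commutation.

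From here the argument is structural. Let $\mathcal P\subseteq B(\Supp\rho_B)$ be the von Neumann algebra generated by the $P^a_x$. The structure theorem for finite-dimensional von Neumann algebras gives a decomposition $\Supp\rho_B=\bigoplus_k\mathcal H_{A'_k}\otimes\mathcal H_{B'_k}$ with $\mathcal P=\bigoplus_k B(\mathcal H_{A'_k})\otimes I_{\mathcal H_{B'_k}}$, and the commutation relation places each $N^b_y$ in $\mathcal P'=\bigoplus_k I_{\mathcal H_{A'_k}}\otimes B(\mathcal H_{B'_k})$. Put $A':=\bigoplus_k\mathcal H_{A'_k}$ and $B':=\bigoplus_k\mathcal H_{B'_k}$; the block-diagonal embedding $\bigoplus_k(\mathcal H_{A'_k}\otimes\mathcal H_{B'_k})\hookrightarrow A'\otimes B'$ is an isometry $V$ under which every $P^a_x$ becomes $\tilde P^a_x\otimes I_{B'}$ and every $N^b_y$ becomes $I_{A'}\otimes\tilde N^b_y$. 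Composed with the Naimark/support reductions, this is the isometry in the theorem, and Bob's nonlocal-game strategy $\{\tilde N^b_y\}$ now acts only on $B'$. For the remaining claim, set $|\psi'\rangle=(I_A\otimes V)|\psi\rangle$ and $\rho_{AB'}=\Tr_{A'}|\psi'\rangle\langle\psi'|$; the mirror identity becomes $(M^a_x\otimes I_{A'B'})|\psi'\rangle=(I_A\otimes\tilde P^a_x\otimes I_{B'})|\psi'\rangle$. Tracing out $A'$, moving the operator $I_A\otimes\tilde P^a_x\otimes I_{B'}$ across the partial trace by cyclicity (legitimate since it acts on $A'$), and reapplying the mirror identity to the bra turns $(M^a_x\otimes I_{B'})\rho_{AB'}$ into $\rho_{AB'}(M^a_x\otimes I_{B'})$; hence every Alice observable commutes with the reduced state on $AB'$.

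Finally, to conclude that the correlation is classical I would use that $\rho_{AB'}$ commutes with the entire algebra $\mathcal M_A\otimes I_{B'}$ generated by Alice's measurements. Decomposing $\mathcal H_A=\bigoplus_\lambda\mathcal H_{\lambda,1}\otimes\mathcal H_{\lambda,2}$ so that $\mathcal M_A=\bigoplus_\lambda B(\mathcal H_{\lambda,1})\otimes I_{\mathcal H_{\lambda,2}}$ forces $\rho_{AB'}=\bigoplus_\lambda I_{\mathcal H_{\lambda,1}}\otimes\xi_\lambda$ with $\xi_\lambda\geq0$ on $\mathcal H_{\lambda,2}\otimes\mathcal H_{B'}$, while $M^a_x=\bigoplus_\lambda (M^a_x)_\lambda\otimes I_{\mathcal H_{\lambda,2}}$ with $\sum_x(M^a_x)_\lambda=I$. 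Substituting into $p(x,y\mid a,b)=\Tr[(M^a_x\otimes\tilde N^b_y)\rho_{AB'}]$ yields
\[
p(x,y\mid a,b)=\sum_\lambda \pi_\lambda\, p_A(x\mid a,\lambda)\, p_B(y\mid b,\lambda),
\]
where $\pi_\lambda:=\dim(\mathcal H_{\lambda,1})\Tr\xi_\lambda$, $p_A(x\mid a,\lambda):=\Tr[(M^a_x)_\lambda]/\dim(\mathcal H_{\lambda,1})$, and $p_B(y\mid b,\lambda):=\Tr[\tilde N^b_y\,\Tr_{\mathcal H_{\lambda,2}}\xi_\lambda]/\Tr\xi_\lambda$; each is a genuine probability distribution (e.g. $\sum_x p_A(x\mid a,\lambda)=\dim(\mathcal H_{\lambda,1})^{-1}\sum_x\Tr[(M^a_x)_\lambda]=1$ for every $a$), so this is a local hidden-variable model and the correlation is classical. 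The two places that need genuine care are the commutation step above and verifying that the ``WLOG projective / full-rank'' reductions really assemble into a single isometry of Bob's original system; everything else is bookkeeping with the von Neumann structure theorem.
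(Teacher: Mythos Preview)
Your proposal is correct and follows the same overall architecture as the paper's proof: extract projectors $\{P^a_x\}$ on Bob's side that mirror Alice's outcomes, show they commute with Bob's game measurements $\{N^b_y\}$, split Bob's space accordingly, and then verify that Alice's operators commute with the reduced state on $A\otimes B'$. The commutation argument $P^a_x N^b_y P^a_{x'}=0$ is exactly the paper's computation $\|Q_a^x S_b^y Q_a^{x'}\|_2=0$, and your mirror projectors coincide with the paper's $Q_a^x$ (the projections onto $\Supp\rho_a^x$).

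Where you diverge is in tooling. You purify and derive the projectors via the self-testing style mirror identity $(M^a_x\otimes I)|\psi\rangle=(I\otimes P^a_x)|\psi\rangle$ built from Bob's actual guessing POVM; the paper stays with the mixed state $\gamma$ and simply takes $Q_a^x$ to be the support projector of $\rho_a^x$, never invoking Bob's guessing measurement explicitly. For the splitting you invoke the full structure theorem for finite-dimensional von Neumann algebras; the paper uses the lighter Tsirelson-style lemma (two commuting families of positive operators factor through a tensor product). For the final commutation $[M^a_x,\rho_{AB'}]=0$ you push the mirror identity through the partial trace over $A'$, while the paper observes directly that $\Tr[R_a^{x'}\tau_a^x]=0$ for $x\neq x'$ forces $R_a^{x'}\tau_a^x=0$. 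And your classicality argument decomposes Alice's algebra to exhibit a hidden-variable model, whereas the paper (Proposition~\ref{commutingprop}) applies all of Alice's measurements nondestructively in sequence to produce a state that is manifestly classical on her side. Your route is slightly heavier in machinery but makes the ``$A'$ carries the mirror, $B'$ carries Bob's strategy'' picture more explicit; the paper's route avoids purification and Naimark bookkeeping entirely, which sidesteps the assembly issue you flag at the end.
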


Apart from the above foundational considerations, our investigation was also motivated
by cryptography.
A useful corollary of Bell inequality violations is that quantum players
that achieve such violations are achieving \textit{certified} randomness.
Their expected score alone is enough to guarantee that their outputs
could not have been predictable to any external adversary, even when the adversary knows
the input.  This is the basis for device-independent randomness expansion \cite{col:2006, pam:2010, ColbeckK:2011,
Vazirani:dice, pironio13, fehr13, CoudronVY:2013, CY:STOC, MS-STOC14,
Universal-spot}.  When two players
play a game repeatedly and exhibit an average score above a certain threshhold,
their outputs must be highly random and can be post-processed into
uniformly random bits.  The produced uniform bits
are random even conditioned on the input bits for the game (thus ``expanding''
the input randomness).

An important and challenging question
arises: does a high
score at a nonlocal game imply that one player's output is random
\textit{to the other player}?  
Such a question is important for randomness expansion in a mutually
mistrustful scenario: suppose that Bob is Alice's adversary, and Alice
wishes to perform randomness expansion by interacting with him,
while maintaining the security of her bits against him. The contraposition of our result
implies that a violation of Bell inequality
in the nonlocal game necessarily requires that Alice's output 
expands the input randomness, with respect to Bob.

Similar problems have been studied in the literature in settings different from ours.
There has been other work showing upper bounds on the probability
that a third party can guess Alice's output after a game
(e.g., \cite{Pawlowski:2010}, \cite{Kempe:2011b}) and single-round games
have appeared
where Bob is sometimes given
\textit{only} Alice's input, and asked to produce her output
(e.g., \cite{Vazirani:fully}, \cite{Vidick:2013}).  Two recent papers
address randomness between multiple players under assumptions about imperfect storage
\cite{Kaniewski:2016, Rib:2016}.

\commentout{\YSHI{Those ``guessing'' problems are defined in  different contexts from ours, and the solutions do not
apply to our problem. In particular, earlier works on certifiable randomness
focused on randomness against a third party, instead of between the players.
Intra-party randomness is indeed proved in~\cite{Kaniewski:2016, Rib:2016}, which are independent to
ours. However, their model is significantly different:
no nonlocal game is played, and Bob's memory is either bounded, or is subject to
a noise operator before the guessing stage.}
\commentout{We believe that the novelty of our
scenario lies in the fact that we are proving that Alice's 
output is uncertain to Bob after the execution of the game (when
information has potentially been lost due to measurement).}
}

\commentout{
Our proof proceeds by relating the scenario in which
Bob tries to guess Alice's output to a scenario in which 
a third party tries to guess Alice's output.  We essentially
show that if Bob could perfectly guess Alice's output, 
then his system can be split into two subsystems and one
of these subsystem can be given to a third party (without disturbing
Bob's strategy) who can then use it to guess Alice's output.
}
\paragraph*{Preliminaries.}

For any finite-dimensional Hilbert space $V$, let
$L ( V )$ denote the vector space of linear
automorphisms of $V$.  For any $M, N \in L ( V )$,
we let $\left< M , N \right>$ denote $\Tr [ M^* N ]$.

Throughout this paper we fix four disjoint finite sets
$\mathcal{A, B, X, Y}$, which denote, respectively,
the first player's input alphabet, the second player's
input alphabet, the first player's output alphabet,
and the second player's output alphabet.
A \textit{$2$-player (input-output) correlation} is a vector
$(p_{ab}^{xy})$ of nonnegative reals,
indexed by $a, b, x, y \in \mathcal{A} \times \mathcal{B}
\times \mathcal{X} \times \mathcal{Y}$, satisfying $\sum_{xy} p_{ab}^{xy}  = 1$
for all pairs $(a, b)$, and satisfying the condition that the quantities
\begin{eqnarray}
\begin{array}{ccc}
p_a^x := \sum_y p_{ab}^{xy}, & \hskip0.2in &
p_b^y := \sum_x p_{ab}^{xy}
\end{array}
\end{eqnarray}
are independent of $b$ and $a$, respectively (no-signaling).

A $2$-player game is a pair $(q, H)$ where
\begin{eqnarray}
q \colon \mathcal{A} \times \mathcal{B} \to [0, 1 ]
\end{eqnarray}
is a probability distribution and
\begin{eqnarray}
H \colon \mathcal{A} \times \mathcal{B} \times \mathcal{X}
\times \mathcal{Y} \to [0, \infty)
\end{eqnarray}
is a function. If $q(a, b)\ne 0$ for all $a\in \mathcal{A}$ and $b\in\mathcal{B}$,
the game is said to have a {\em complete support}. The expected score associated to
such a game for a $2$-player correlation $(p_{ab}^{xy})$
is
\begin{eqnarray}
\sum_{a, b, x, y }  q ( a, b ) H ( a, b, x, y ) p_{ab}^{xy}.
\end{eqnarray}

A \textit{$2$-player strategy} is a $5$-tuple
\begin{eqnarray}\label{eqn:strategy}
\Gamma & = & ( D, E, \{ \{ R_a^x \}_x \}_a , 
\{ \{ S_b^y \}_y \}_b , \gamma )
\end{eqnarray}
such that $D, E$ are finite dimensional Hilbert spaces,
$\{ \{ R_a^x \}_x \}_a$ is a family of $\mathcal{X}$-valued
positive operator valued measures on $D$ (indexed by $\mathcal{A}$),
$\{ \{ S_b^y \}_y \}_b$ is a family of $\mathcal{Y}$-valued
positive operator valued measures on $E$,
and $\gamma$ is a density operator on $D \otimes E$.
The \textit{second player states} $\rho_{ab}^{xy}$ of $\Gamma$
is defined by
\begin{eqnarray}
\rho_{ab}^{xy} & := & \Tr_D \left[ \sqrt{R_a^x \otimes S_b^y}
\gamma \sqrt{R_a^x \otimes S_b^y} \right]
\end{eqnarray}
Define $\rho_a^x$ by the same expression with
$S_b^y$ replaced by the identity operator.  (These represent
the pre-measurement states of the second-player.)
Define $\rho:=\Tr_D(\gamma)
=\sum_x\rho_a^x$ for any $a$.

We say that the
strategy $\Gamma$ \textit{achieves} the $2$-player correlation
$(p_{ab}^{xy} )$ if $p_{ab}^{xy} =  \Tr [ \gamma ( R_a^x \otimes S_b^y ) ]$
for all $a, b, x, y$.
If a $2$-player correlation $(p_{ab}^{xy})$ can be achieved by
a $2$-player strategy then we say that it is a \textit{quantum}
correlation.

If $(p_{ab}^{xy})$ is a convex combination
of product distributions (i.e., distributions of the form
$(q_a^x ) \otimes (r_b^y )$ where $\sum_x q_a^x = 1$
and $\sum_y r_b^y = 1$) then we say that $(p_{ab}^{xy})$
is a \textit{classical} correlation.
Note that if the underlying state of a quantum strategy
is separable (i.e., it is a convex combination of bipartite
product states) then the correlation it achieves is classical.

\paragraph*{Congruent strategies.}

It is necessary to identify pairs of strategies that
are essentially the same from an operational standpoint.
We use a definition that is similar to definitions
from quantum self-testing (e.g., Definition 4 in
\cite{McKagueBQP}).

A \textit{unitary embedding}
from a $2$-player strategy
\begin{eqnarray}
\Gamma & = & ( D, E, \{ \{ R_a^x \}_x \}_a , 
\{ \{ S_b^y \}_y \}_b , \gamma )
\end{eqnarray}
to another $2$-player strategy
\begin{eqnarray}
\overline{\Gamma} & = & ( \overline{D}, \overline{E}, \{ \{ \overline{R}_a^x \}_x \}_a , 
\{ \{ \overline{S}_b^y \}_y \}_b , \overline{\gamma} )
\end{eqnarray}
is a pair of unitary embeddings $i \colon D \hookrightarrow
\overline{D}$ and $j \colon E \hookrightarrow \overline{E}$
such that $\overline{\gamma} = (i \otimes j ) \gamma (i \otimes j)^*$,
$R_a^x = i^* \overline{R}_a^x i$, and $S_b^y
 = j^* \overline{S}_b^y j$.
 
 Additionally, if $\Gamma$ is such that $D = D_1 \otimes D_2$,
and $R_a^x = G_a^x \otimes I$ for all $a, x$, then we will call
the strategy given by
\begin{eqnarray}
( D_1, E, \{ \{ G_a^x \}_a \}_x , 
\{ \{ S_b^y \}_y \}_b , \Tr_{D_2} \gamma )
\end{eqnarray} 
a \textit{partial trace} of $\Gamma$.  We can similarly
define a partial trace on the second subspace $E$
if it is a tensor product space.

We will say that two strategies $\Gamma$ and $\Gamma'$
are \textit{congruent} if there exists a sequence
of strategies $\Gamma = \Gamma_1, \ldots, \Gamma_n =
\Gamma'$ such that for each $i \in \{ 1, \ldots, n-1 \}$,
either $\Gamma_{i+1}$ is a partial trace of
$\Gamma_i$, or vice versa, or there is a unitary embedding
of $\Gamma_i$ into $\Gamma_{i+1}$, or vice versa.
This is an equivalence relation.  Note that if two strategies
are congruent then they achieve the same
correlation.

\paragraph*{Essentially classical strategies.}
We are ready to define the key concept in this paper and to
state formally our main theorem.

\begin{definition} A quantum strategy~(\ref{eqn:strategy}) is said to be {\em essentially classical}
if  it is congruent to one where $\gamma$ commutes with $R_a^x$ for all $x$ and $a$.
\end{definition}

We are interested in strategies after the application of which Bob can predict Alice's output given her input.
This is formalized as follows. If $\chi_1, \ldots, \chi_n$ are positive semidefinite
operators on some finite dimensional Hilbert space
$V$, then we say that $\{ \chi_1, \ldots, \chi_n \}$
is \textit{perfectly distinguishable} if 
$\chi_i$ and $\chi_j$ have orthogonal support for
any $i \neq j$.  This is equivalent to the condition
that there exists a projective measurement on $V$ which
perfectly identifies the state from the
set $\{ \chi_1 , \ldots, \chi_n \}$.
\begin{definition} A quantum strategy~(\ref{eqn:strategy}) {\em allows perfect guessing} (by Bob) 
if for any $a, b, y$, $\{\rho_{ab}^{xy}\}_x$ is perfectly distinguishable.
\end{definition}

\begin{theorem}[Main Theorem] \label{thm:main} If a strategy for a complete-support game allows
perfect guessing, then it is essentially classical.
\end{theorem}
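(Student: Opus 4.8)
The plan is to transform $\Gamma$, through a sequence of partial traces and unitary embeddings, into a congruent strategy whose shared state commutes with all of Alice's measurement operators. I would first note that the hypothesis, although phrased in terms of Bob's post-measurement states $\rho_{ab}^{xy}=\sqrt{S_b^y}\,\rho_a^x\,\sqrt{S_b^y}$, already constrains the pre-measurement states: perfect distinguishability of $\{\rho_{ab}^{xy}\}_x$ is the condition $\Tr[S_b^y\rho_a^xS_b^y\rho_a^{x'}]=0$ for $x\neq x'$, which forces $\sqrt{\rho_a^x}\,S_b^y\sqrt{\rho_a^{x'}}=0$; summing over $y$ and using $\sum_yS_b^y=I$ yields $\sqrt{\rho_a^x}\sqrt{\rho_a^{x'}}=0$, so $\{\rho_a^x\}_x$ is perfectly distinguishable for every $a$. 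I would then apply the routine reductions of restricting $E$ to $\mathrm{supp}(\rho)$ and of Naimark-dilating each $\{R_a^x\}_x$ to a projective measurement; both moves are congruences and preserve the relations $\sqrt{\rho_a^x}\,S_b^y\sqrt{\rho_a^{x'}}=0$ and $\sqrt{\rho_a^x}\sqrt{\rho_a^{x'}}=0$ ($x\neq x'$). Hence one may assume each $\{R_a^x\}_x$ is projective, $\rho$ is full-rank on $E$, and, for each $a$, the projections $Q_a^x$ onto $\mathrm{supp}(\rho_a^x)$ form a genuine PVM on $E$ with $Q_a^x\rho=\rho Q_a^x=\rho_a^x$; moreover $\sqrt{\rho_a^x}\,S_b^y\sqrt{\rho_a^{x'}}=0$ (now that $\ker\rho=0$) says precisely that each $S_b^y$ leaves every $\mathrm{supp}(\rho_a^x)$ invariant, i.e.\ $[S_b^y,Q_a^x]=0$ for all $a,b,x,y$.

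The crux is a \emph{steering identity}. Fix a purification $|\psi\rangle\in D\otimes E\otimes F$ of $\gamma$. Then $(R_a^x\otimes I)|\psi\rangle=(I\otimes Q_a^x\otimes I)|\psi\rangle$ for all $a,x$, the operators acting on $D$ and on $E$ respectively. This is a short computation: $\langle\psi|(R_a^x\otimes I)|\psi\rangle$, $\langle\psi|(I\otimes Q_a^x\otimes I)|\psi\rangle$ and $\langle\psi|(R_a^x\otimes Q_a^x\otimes I)|\psi\rangle$ all equal the marginal probability $\Tr[\rho_a^x]$ (using that $R_a^x$ is a projection and $Q_a^x\rho_a^x=\rho_a^x$), so the squared norm of the difference of the two vectors is zero. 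Informally, Alice's measurement ``is already implemented'' inside Bob's system by the commuting family $\{Q_a^x\}$.

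To extract that family, let $\mathcal{N}\subseteq L(E)$ be the $*$-algebra generated by $\{S_b^y\}$; it contains $I=\sum_yS_b^y$. By the structure theorem for finite-dimensional von Neumann algebras, $E\cong\bigoplus_kB'_k\otimes A'_k$ with $\mathcal{N}$ acting in block $k$ as $L(B'_k)\otimes I_{A'_k}$ and its commutant $\mathcal{N}'$, which contains every $Q_a^x$, acting as $I_{B'_k}\otimes L(A'_k)$. Embedding $E$ isometrically into $B'\otimes A'$, where $B'=\bigoplus_kB'_k$ and $A'=\bigoplus_kA'_k$, produces a strategy $\overline\Gamma$ congruent to $\Gamma$ with local spaces $D$ and $B'\otimes A'$, Bob operators of the form $\widetilde S_b^y\otimes I_{A'}$, and each $Q_a^x$ carried to $I_{B'}\otimes\widetilde Q_a^x$; the steering identity transports to $(R_a^x\otimes I)|\overline\psi\rangle=(\widetilde Q_a^x)_{A'}|\overline\psi\rangle$. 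Since Bob's operators act trivially on $A'$, a partial trace over $A'$ yields a congruent strategy $\Gamma''$ with state $\gamma''=\Tr_{A'}\overline\gamma=\Tr_{A'F}|\overline\psi\rangle\langle\overline\psi|$. Applying the transported identity to $|\overline\psi\rangle$ and to $\langle\overline\psi|$, and using that $\widetilde Q_a^x$ is idempotent on the traced-out factor $A'$, gives $(R_a^x\otimes I)\gamma''=\Tr_{A'F}[(\widetilde Q_a^x)_{A'}|\overline\psi\rangle\langle\overline\psi|]=(R_a^x\otimes I)\gamma''(R_a^x\otimes I)$. The right-hand side is self-adjoint, so $(R_a^x\otimes I)\gamma''$ is self-adjoint and therefore equals $\gamma''(R_a^x\otimes I)$; thus $\gamma''$ commutes with every $R_a^x$, and $\Gamma''$ witnesses that $\Gamma$ is essentially classical.

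I expect the main difficulty to lie in the third step: making the isometric embedding $E\hookrightarrow B'\otimes A'$ explicit — the left side is a direct sum of tensor products, not itself a tensor product, so one must embed the ``diagonal'' blocks — and verifying that all four data, namely $\gamma$, the $R_a^x$, the $S_b^y$, and the $Q_a^x$, transport through it compatibly, so that the partial trace over $A'$ is truly a congruence move and the steering identity survives. By contrast, the collapse of Bob-side guessing to distinguishability of $\{\rho_a^x\}_x$, the steering identity itself, and the final commutation argument are each short once the framework is set up. (Completeness of the game's support, while assumed in the statement, plays no role in this implication.)
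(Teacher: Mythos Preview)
Your proposal is correct and follows essentially the same architecture as the paper's proof: reduce to full-rank $\rho$, define the projections $Q_a^x$ on Bob's side, show $[Q_a^x,S_b^y]=0$, split $E$ via the commuting-families lemma (equivalently the structure theorem you invoke), and trace out the $Q$-factor to obtain a congruent strategy whose state commutes with each $R_a^x$. The only notable differences are in execution of the last step---you Naimark-dilate Alice's POVMs, purify $\gamma$, and use the steering identity $(R_a^x)_D|\psi\rangle=(Q_a^x)_E|\psi\rangle$ together with partial-trace cyclicity on the traced-out factor, whereas the paper argues more directly (without dilating or purifying) that $\Tr[R_a^{x'}\tau_a^x]=0$ for $x\neq x'$ forces $R_a^{x'}\tau_a^x=0$, whence $\sum_x\tau_a^x=\Tr_{E_2}\overline{\gamma}$ commutes with every $R_a^{x'}$.
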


(We note that the converse of the statement is not true.
This is because even in a classical strategy, Alice's output may depend on some local randomness,
which Bob cannot perfectly predict.)

Before giving the proof of this result, we note the following proposition, which taken together with 
Theorem~\ref{thm:main} implies that any strategy that permits perfect guessing yields a classical correlation.

\begin{proposition}
\label{commutingprop}
The correlation achieved by an essentially classical strategy must be classical.
\end{proposition}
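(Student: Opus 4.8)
The plan is to reduce to a normal form and then invoke the structure theory of finite-dimensional $*$-algebras. Since congruent strategies achieve the same correlation, it suffices to prove the statement for a strategy $\Gamma = ( D, E, \{ \{ R_a^x \}_x \}_a, \{ \{ S_b^y \}_y \}_b, \gamma )$ in which $\gamma$ commutes with $R_a^x \otimes I_E$ for every $a,x$, and to show that $p_{ab}^{xy} = \Tr[ \gamma ( R_a^x \otimes S_b^y ) ]$ is a convex combination of product distributions.

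First I would let $\mathcal{N} \subseteq L(D)$ be the $*$-subalgebra generated by $\{ R_a^x \}_{a,x}$; it is unital because $\sum_x R_a^x = I_D$. By the structure theorem there is an isomorphism $D \cong \bigoplus_k \mathbb{C}^{m_k} \otimes \mathbb{C}^{n_k}$ under which $\mathcal{N} = \bigoplus_k L(\mathbb{C}^{m_k}) \otimes I_{n_k}$ and the commutant is $\mathcal{N}' = \bigoplus_k I_{m_k} \otimes L(\mathbb{C}^{n_k})$. Accordingly $R_a^x = \bigoplus_k (R_a^x)_k \otimes I_{n_k}$, and since $\gamma$ commutes with $\mathcal{N} \otimes I_E$ it belongs to $\mathcal{N}' \otimes L(E)$, so $\gamma = \bigoplus_k I_{m_k} \otimes \gamma_k$ for positive operators $\gamma_k$ on $\mathbb{C}^{n_k} \otimes E$. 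Substituting these into the formula for $p_{ab}^{xy}$ and evaluating block by block gives $p_{ab}^{xy} = \sum_k \Tr[ (R_a^x)_k ] \, \Tr[ S_b^y \tau_k ]$, where $\tau_k := \Tr_{\mathbb{C}^{n_k}}[ \gamma_k ]$.

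It then remains to observe that this sum is a convex combination of products. Since $\sum_x (R_a^x)_k = I_{m_k}$, $\sum_y S_b^y = I_E$, $\Tr[\tau_k] = \Tr[\gamma_k]$, and $\sum_k m_k \Tr[\gamma_k] = \Tr[\gamma] = 1$, the numbers $P_k := m_k \Tr[\gamma_k]$, $f_{k,a}^x := \Tr[(R_a^x)_k]/m_k$ and (for $k$ with $\Tr[\gamma_k] \neq 0$) $g_{k,b}^y := \Tr[S_b^y \tau_k]/\Tr[\gamma_k]$ are probability distributions satisfying $p_{ab}^{xy} = \sum_k P_k f_{k,a}^x g_{k,b}^y$, which is the definition of a classical correlation. (Equivalently, the separable state $\gamma' := \sum_k P_k \, \frac{I_{m_k} \otimes I_{n_k}}{m_k n_k} \otimes \frac{\tau_k}{\Tr[\gamma_k]}$ achieves the same correlation, so one may instead cite the earlier remark that a strategy with separable underlying state is classical.) The only point requiring care is the presence of the multiplicity spaces $\mathbb{C}^{n_k}$: Alice's operators act trivially there, so any entanglement of $\gamma_k$ across $\mathbb{C}^{n_k}:E$ is irrelevant to the correlation; the commutant decomposition makes this precise, and the rest is bookkeeping.
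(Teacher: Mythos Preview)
Your argument is correct, but it takes a different route from the paper's. You invoke the Artin--Wedderburn structure theorem for the unital $*$-algebra $\mathcal{N}$ generated by Alice's POVM elements, decompose $D$ into blocks $\mathbb{C}^{m_k}\otimes\mathbb{C}^{n_k}$, and read off the correlation as an explicit convex combination indexed by the block label $k$. The paper instead gives an operational shortcut: because $\gamma$ commutes with each $R_a^x\otimes I_E$, the nondestructive measurements $\Phi_a(T)=\sum_x |x\rangle\langle x|\otimes\sqrt{R_a^x}\,T\,\sqrt{R_a^x}$ leave the $DE$ marginal unchanged, so one may apply \emph{all} of them in sequence and collect the outcomes in classical registers $V_1,\ldots,V_n$; the resulting state $\Lambda_{V_1\cdots V_n E}$ is classical on Alice's side, hence separable, and it reproduces the original correlation, so the earlier remark that separable states yield classical correlations finishes the proof.

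Your approach has the advantage of exhibiting the hidden variable explicitly (it is the block index $k$) and of making the irrelevance of entanglement across the multiplicity spaces $\mathbb{C}^{n_k}$ structurally transparent; it also yields a compact replacement separable state $\gamma'$. The paper's approach avoids the structure theorem entirely and is therefore more elementary and self-contained, at the cost of leaving the hidden variable implicit in the joint classical record $(V_1,\ldots,V_n)$.
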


\begin{proof}
We need only to consider the case that $\gamma$ commutes with
$R_a^x$ for all $a, x$.
For each $a \in \mathcal{A}$, let $V_a = \mathbb{C}^\mathcal{X}$,
and let $\Phi_a \colon L ( D ) \to L ( V_a \otimes D )$
be the nondestructive measurement defined by 
\begin{eqnarray}
\Phi_a ( T ) & = & \sum_{x \in \mathcal{X}} \left| x \right> \left< x \right|
\otimes \sqrt{R_a^x} T \sqrt{R_a^x}.
\end{eqnarray}
Note that by the commutativity
assumption, such operation leaves the state of $DE$ unchanged.

Without loss of generality, assume $\mathcal{A} = \{ 1, 2, \ldots, n \}$.
Let $\Lambda  \in L ( V_1 \otimes \ldots \otimes V_n \otimes D \otimes E )$
be the state that arises from applying the superoperators
$\Phi_1, \ldots, \Phi_n$, in order, to $\gamma$.  For any
$a \in \{ 1, \ldots, n \}$, the reduced state $\Lambda_{V_a E}$
is precisely the same as the result of taking the state
$\gamma$, applying the measurement $\{ R_a^x \}_x$
to $D$, and recording the result in $V_a$.
Alice and Bob can therefore generate the correlation $(p_{ab}^{xy})$
from the marginal state $\Lambda_{V_1 \cdots V_n E}$ alone (if
Alice possesses $V_1, \ldots, V_n$ and Bob possesses $E$).  Since this state is classical
on Alice's side, and therefore separable, the result follows.
\end{proof}

\begin{corollary}\label{maincor}If a strategy for a complete-support game allows perfect guessing,
the correlation achieved must be classical. $\qed$
\end{corollary}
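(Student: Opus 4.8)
The plan is to obtain this statement by directly composing the two results that immediately precede it, so the argument is essentially one line. By Theorem~\ref{thm:main}, any strategy for a complete-support game that allows perfect guessing is essentially classical; that is, it is congruent to a strategy $\overline{\Gamma}$ whose underlying state commutes with all of Alice's measurement operators $\overline{R}_a^x$. By Proposition~\ref{commutingprop}, the correlation achieved by such a commuting strategy $\overline{\Gamma}$ is classical. Finally, recall from the discussion of congruence that congruent strategies achieve the same correlation; hence the original strategy also achieves a classical correlation, which is exactly the assertion of the corollary.

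There is no real obstacle at this stage: all of the substantive work has already been done in Theorem~\ref{thm:main}, and Proposition~\ref{commutingprop} conveniently packages the reduction from an arbitrary essentially classical strategy to its commuting representative (its proof begins by noting it suffices to treat the commuting case). One could imagine trying to prove the corollary more directly --- for instance, by arguing that perfect guessing forces Bob's post-measurement states $\rho_{ab}^{xy}$ to encode a faithful record of Alice's output $x$, and then extracting from this a splitting of Bob's system that makes Alice's marginal separable --- but this is precisely the content of the isometry constructed in Theorem~\ref{thm:main}. Invoking that theorem therefore both shortens the exposition and avoids reproving its key structural step, so I would simply cite Theorem~\ref{thm:main} and Proposition~\ref{commutingprop} and conclude.
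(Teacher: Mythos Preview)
Your proposal is correct and matches the paper's approach exactly: the paper presents Corollary~\ref{maincor} with only a \qed, having already noted that it follows by combining Theorem~\ref{thm:main} with Proposition~\ref{commutingprop}, together with the fact that congruent strategies achieve the same correlation.
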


\paragraph*{Proving Theorem~\ref{thm:main}.}
The proof will proceed as follows. 
First, we show that Alice's measurements $R_a := \{  R_a \}_x$ induce projective measurements
$Q_a:=\{Q_a^x\}_x$ on Bob's system.
Next, we argue that $Q_a$ commutes with Bob's own measurement $S_b:=\{S_b^y\}_y$ for any $b$.
This allows us to isometrically decompose Bob's system into two subsystems $E_1\otimes E_2$,
such that $S_b$ acts trivially on $E_2$, while $E_2$ alone can be used to predict $x$ given $a$.
The latter property allows us to arrive at the conclusion that $R_a$ commutes with $\gamma_{DE_1}$.

We will need the following lemma, which
is commonly used in studying two-player quantum strategies.
The proof was sketched in~\cite{TsirelsonNote} (see also Theorem 1 in \cite{sw:2008}).

\begin{lemma}
\label{commutelemma}
Let $V$ be a finite-dimensional Hilbert space and let
$\{ M_j \}$ and $\{ N_k \}$ be sets of positive semidefinite
operators on $V$ such that $M_j N_k = N_k M_j$ for all
$j, k$.  Then, there exists a unitary embedding
$i \colon V \hookrightarrow V_1 \otimes V_2$ and
positive semidefinite operators $\{ \overline{M}_j \}$ on
$V_1$ and $\{ \overline{N}_k \}$ on $V_2$ such that
$M_j = i^* (\overline{M}_j \otimes \mathbb{I} ) i$ and
$N_k = i^* ( \mathbb{I} \otimes \overline{N}_k ) i$ for all $j, k$.
\end{lemma}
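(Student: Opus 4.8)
The plan is to reduce the statement to the structure theory of finite-dimensional $C^*$-algebras. Let $\mathcal{M} \subseteq L(V)$ be the unital $*$-subalgebra generated by $\{M_j\}$. Since the $M_j$ are self-adjoint this is a finite-dimensional (unital) $C^*$-algebra, and its commutant $\mathcal{M}'$ in $L(V)$ satisfies $\{N_k\} \subseteq \mathcal{M}'$: the hypothesis $M_j N_k = N_k M_j$ says each $N_k$ commutes with every generator of $\mathcal{M}$, hence with every element of $\mathcal{M}$ (an operator commuting with $a$ and $b$ commutes with $ab$ and with linear combinations and with $\mathbb{I}$).

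Next I would invoke the Artin--Wedderburn / double-commutant structure theorem: there is a unitary $u \colon V \to \bigoplus_{\alpha=1}^{r} W_\alpha \otimes Z_\alpha$ under which $u \mathcal{M} u^* = \bigoplus_\alpha L(W_\alpha) \otimes \mathbb{I}_{Z_\alpha}$ and $u \mathcal{M}' u^* = \bigoplus_\alpha \mathbb{I}_{W_\alpha} \otimes L(Z_\alpha)$. Concretely, one decomposes $V$ into its $\mathcal{M}$-isotypic components; the $\alpha$-th component is a direct sum of $m_\alpha = \dim Z_\alpha$ copies of an irreducible module $W_\alpha$, giving the summand $W_\alpha \otimes Z_\alpha$, and Schur's lemma identifies the commutant, with the off-diagonal blocks vanishing because the $W_\alpha$ are pairwise non-isomorphic as $\mathcal{M}$-modules. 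Conjugating by $u$, we may therefore write $M_j = \bigoplus_\alpha m_j^{(\alpha)} \otimes \mathbb{I}$ and $N_k = \bigoplus_\alpha \mathbb{I} \otimes n_k^{(\alpha)}$; since a diagonal block of a block-diagonal positive operator is positive, and since $A \otimes \mathbb{I}$ (resp.\ $\mathbb{I} \otimes B$) positive forces $A$ (resp.\ $B$) positive, every $m_j^{(\alpha)}$ and $n_k^{(\alpha)}$ is positive semidefinite.

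Finally I would collapse the direct sum of tensor products into a single tensor product. Put $V_1 = \bigoplus_\alpha W_\alpha$ and $V_2 = \bigoplus_\alpha Z_\alpha$, and let $i \colon V \hookrightarrow V_1 \otimes V_2$ be $u$ followed by the canonical isometric inclusion of the ``block-diagonal'' subspace $\bigoplus_\alpha W_\alpha \otimes Z_\alpha \hookrightarrow (\bigoplus_\alpha W_\alpha) \otimes (\bigoplus_\alpha Z_\alpha)$. Define $\overline{M}_j = \bigoplus_\alpha m_j^{(\alpha)}$ on $V_1$ and $\overline{N}_k = \bigoplus_\alpha n_k^{(\alpha)}$ on $V_2$; these are positive semidefinite. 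Because $\overline{M}_j$ is block-diagonal, $\overline{M}_j \otimes \mathbb{I}_{V_2}$ carries each summand $W_\alpha \otimes Z_\beta$ into itself, hence preserves the image of $i$, and on $W_\alpha \otimes Z_\alpha$ it restricts to $m_j^{(\alpha)} \otimes \mathbb{I}$; thus $i^*(\overline{M}_j \otimes \mathbb{I}) i = M_j$, and symmetrically $i^*(\mathbb{I} \otimes \overline{N}_k) i = N_k$, which is the asserted decomposition.

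I expect the only real obstacle to be the middle step: producing the simultaneous block decomposition of $\mathcal{M}$ and $\mathcal{M}'$, whose content is the finite-dimensional $C^*$-algebra structure theorem together with the (standard but slightly technical) fact that an operator commuting with $\mathcal{M}$ has no matrix entries across distinct isotypic blocks. The first and third steps are formal bookkeeping. If a self-contained argument is preferred, one can replace the middle step by first splitting off blocks using a maximal orthogonal family of minimal central projections of $\mathcal{M}$, and then, within a block where $\mathcal{M}$ acts as ``irreducible $\otimes$ identity,'' applying Schur's lemma to pin down the tensor factorization.
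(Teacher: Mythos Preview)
Your argument is correct and is the standard route via the structure theorem for finite-dimensional $C^*$-algebras: generate $\mathcal{M}$ from the $M_j$, note $N_k \in \mathcal{M}'$, Artin--Wedderburn-decompose $V \cong \bigoplus_\alpha W_\alpha \otimes Z_\alpha$, and then embed the block-diagonal into the full tensor product $(\bigoplus_\alpha W_\alpha) \otimes (\bigoplus_\alpha Z_\alpha)$. The bookkeeping in your third step is clean and the positivity claims are justified.

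As for comparison: the paper does not actually supply its own proof of this lemma. It only states the result and cites Tsirelson's note and Theorem~1 of \cite{sw:2008} for a sketch. The argument in those references is precisely the $C^*$-algebra structure-theorem approach you outline, so your proposal is the intended proof rather than an alternative one. If anything, you have filled in more detail (the explicit isometry from $\bigoplus_\alpha W_\alpha \otimes Z_\alpha$ into $V_1 \otimes V_2$, and the check that $\overline{M}_j \otimes \mathbb{I}$ respects the image of $i$) than the cited sketches typically provide.
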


\begin{proofof}{of Theorem~\ref{thm:main}}
Express $\Gamma$ as in (\ref{eqn:strategy}).  
Without loss of generality, we may assume
that $\Supp \rho = E$.
By the assumption that $\Gamma$ allows perfect guessing,
for any $a$, the second-player states $\{ \rho_a^x \}_x$
must be perfectly distinguishable (since otherwise
the post-measurement states $\{ \rho_{ab}^{xy} \}_x$
would not be). 
Therefore, we can find projective measurements $\{ \{ Q_a^x \}_x \}_a$ on $E$
such that
\begin{eqnarray}
Q_a^x \rho Q_a^x = \rho_a^x.
\end{eqnarray}
Note that for any fixed $a$, if the measurements $\{R_a^x \}_a$ and $\{Q_a^x \}_a$
are applied to $\gamma$, the outcome is always the same.

We have that the states
\begin{eqnarray}
\label{rho1st} \rho_{ab}^{xy} & = & \sqrt{ S_b^y } Q_a^x \rho Q_a^x \sqrt{S_b^y } \\
\label{rho2nd}
\rho_{ab}^{x'y} & = & \sqrt{ S_b^{y} } Q_a^{x'} \rho Q_a^{x'} \sqrt{S_b^y }
\end{eqnarray}
have orthogonal support for any $x \neq x'$.  Since $\Supp \rho = E$,  we have $c \mathbb{I} \leq \rho$ for some $c > 0$.
Therefore,
\begin{eqnarray}
\left< \sqrt{ S_b^y } c Q_a^x \sqrt{S_b^y } , \sqrt{ S_b^y } c Q_a^{x'} \sqrt{S_b^y } \right> & = & 0,
\end{eqnarray}
which implies, using the cyclicity of the trace function,
\begin{eqnarray}
\left\| Q_a^x S_b^y Q_a^{x'} \right\|_2 & = & 0.
\end{eqnarray}
Therefore, the measurements $\{ Q_a^x \}_x$ and $\{ S_b^y \}_y$ commute
for any $a, b$. 

By Lemma~\ref{commutelemma}, we can find a unitary embedding
$i \colon E \hookrightarrow E_1 \otimes E_2$ and such that
$S_b^y = i^* ( \overline{S}_b^y \otimes \mathbb{I} ) i$ and
$Q_a^x = i^* ( \mathbb{I} \otimes \overline{Q}_a^x ) i$, for measurements
$\{\overline{S}_b^y\}_y$ and $\{\overline{Q}_a^x\}_x$.
With 
\begin{align}
\overline{\gamma}= (\mathbb{I}_D \otimes i) \gamma (\mathbb{I}_D \otimes i^*),
\end{align} 
the strategy
$\Gamma$ embeds into the strategy
\begin{eqnarray*}
 \Gamma':=\left( D, E_1 \otimes E_2, \{ \{ R_a^x \}_x \}_a , \{ \overline{S}_b^y \otimes \mathbb{I}_{E_2} \}_y \}_b,
\overline{\gamma}\right).
\end{eqnarray*}

For any fixed $a$, the state $\overline{\gamma}$ is such that applying
the measurement $\{ R_a^x \}_x$ to the system $D$ and the measurement
$\{ \overline{Q}_a^x \}_x$ to the system $E_2$ always yields the same outcome.
In particular, if we let
\begin{eqnarray}
\tau_a^x & = & \Tr_{E_2} \left( \overline{Q}_a^x \overline{\gamma}  \right),
\end{eqnarray}
then $\Tr [ R_a^{x'} \tau_a^x ]$ will always be equal to $1$ if $x = x'$ and equal to $0$ otherwise.
Therefore $\{ R_a^{x} \}_x$ commutes with the operators $\{ \tau_a^x \}_x$, and thus also with 
their sum $\sum_x \tau_a^x = \Tr_{E_2} \overline{\gamma}$. 

Thus
if we trace out the strategy $\Gamma'$ over the system $E_2$,
we obtain a strategy (congruent to the original strategy $\Gamma$)
in which Alice's measurement operators commute with the shared state.
\end{proofof}

\paragraph*{Blind randomness expansion.}
When two players achieve a superclassical score at a nonlocal game, their
outputs must be at least partially unpredictable to an outside party, even
if that party knows the inputs that were given.  This fact
is one the bases for randomness expansion from untrusted devices
\cite{col:2006}, where a user referees a nonlocal game repeatedly with $2$
or more untrusted players (or, equivalently, $2$ or more untrusted quantum devices)
to expand a small uniformly random seed $S$
into a large output string $T$ that is uniform conditioned on $S$.  The players can exhibit arbitrary
quantum behavior, but it is assumed that they are prevented from communicating
with the adversary.  At the center of some of the
discussions of randomness expansion (e.g., \cite{pam:2010}) is the
fact that the min-entropy of the outputs of the players can be lower bounded
by an increasing function of the score achieved at the game.

The main result of this paper suggests a new protocol, \textit{blind} randomness expansion,
with even fewer trust assumptions.  Suppose that we wish to perform randomness expansion
with two untrusted players, where only the first player, not the second, can be blocked from communication
with the adversary.  In this case it is necessary to measure the unpredictability of
the first player's output with respect to the second player.  Our main result shows that,
for complete support games, any superclassical score guarantees
that the first player's output is unpredictable to the second. This matches
the ordinary randomness expansion scenario.

A natural next step is to put a lower bound on the min-entropy of the first player's output
to the second player, and here some divergences begin to appear between ordinary randomness
expansion and blind randomness expansion.
Consider the CHSH game.
If the correlation of two devices $D = (D_1, D_2)$ is
$(p_{ab}^{xy} )$, then the expected score for the CHSH game is  $\frac{1}{4} \sum_{x \oplus y = a \wedge b} p_{ab}^{xy}$.
The best possible winning probability
that can be achieved by a classical correlation
is $3/4$, while the best possible winning probability
that can be achieved by a quantum correlation is
$\frac{1}{2} + \frac{\sqrt{2}}{4} \approx 0.853 \ldots$.

Self-testing for the CHSH game \cite{mys:2012, ruv:2013,
MillerS:self-testing:2013} implies that any quantum strategy
that achieves the optimal score $\frac{1}{2} + \frac{\sqrt{2}}{4}$
is congruent to the following strategy (in which we use
the notation $\left| \theta \right> \in \mathbb{C}^2$
to denote the vector $\cos \theta \left| 0 \right> +
\sin \theta \left| 1 \right>$, and let $\Phi^+ = \frac{1}{\sqrt{2}} \left( 
\left| 00 \right> + \left| 11 \right> \right)$):
\begin{eqnarray*}
\begin{array}{rclcrcl} 
\gamma & =  & \Phi^+ \left( \Phi^+ \right)^* \\ \\
R_0^0 & = & \left| 0 \right>
\left< 0 \right| & \hskip0.5in & R_1^0 & = & \left| \frac{\pi}{4} \right>
\left< \frac{\pi}{4} \right|  \\ \\
S_0^0 & = & \left| \frac{\pi}{8} \right> \left< \frac{\pi}{8} \right| &&
S_1^0 & = & \left| - \frac{\pi}{8} \right> \left<  - \frac{\pi}{8} \right|.
\end{array}
\end{eqnarray*}
The min-entropy of Alice's output in this case --- even from the
perspective of an adversary who possess quantum side
information and knows Alice's input --- is $-\log_2 (1/2) = 1$.
(Since the state of the strategy is pure, quantum side information does not help.)

On the other hand, the second player has more information
than an external adversary.
For example, when $a = b = 0$, the second player states
are
\begin{eqnarray*}
\begin{array}{ccc}
\rho_{00}^{00}  =  \left( \frac{1}{2} + \frac{\sqrt{2}}{4} \right)
\left| \frac{\pi}{8} \right> \left< \frac{\pi}{8} \right|  & \hskip0.15in &
\rho_{00}^{01}  = \left( \frac{1}{2} - \frac{\sqrt{2}}{4} \right)
\left| \frac{5\pi}{8} \right> \left< \frac{5\pi}{8} \right|  \\ \\
\rho_{00}^{10}  =  \left( \frac{1}{2} - \frac{\sqrt{2}}{4} \right)
\left| \frac{\pi}{8} \right> \left< \frac{\pi}{8} \right|  & &
\rho_{00}^{11}  =  \left( \frac{1}{2} + \frac{\sqrt{2}}{4} \right)
\left| \frac{5 \pi}{8} \right> \left< \frac{5\pi}{8} \right|.
\end{array}
\end{eqnarray*}
If the second player wishes to guess the first player's output (given her input),
his best strategy to guess $x = 0$ if his state is $\left| \pi/8 \right>$
and to guess $x = 1$ if his state is $\left| 5 \pi / 8 \right>$.  (This is equivalent
to predicting that his own output $y$ agrees with $x$.)  Similar
results hold for other input combinations, and thus the
min-entropy of the first player's output from the second
player's perspective is $-\log_2 ( \frac{1}{2} + \frac{\sqrt{2}}{4} )  < 1$.
Thus, while one-shot blind randomness expansion is achieved
for the same scores (at complete-support games) as ordinary randomness expansion,
the certified min-entropy may be different.

\paragraph*{Further directions.}
\commentout{
We have shown that any strategy in which Alice's output is predictable
to Bob is equivalent to a classical strategy.  The flipside of this
result is that any superclassical score at a complete-support
nonlocal game generates data for Alice that is not predictable to Bob.
This suggests an approach to randomness expansion between
two mutually mistrustful parties, or blind randomness expansion.
}

A natural next step 
would be to prove a strong robust version of
Corollary~\ref{maincor} --- for example, one could attempt to prove, for the CHSH game, an nondecreasing
function $f \colon (3/4, \frac{1}{2} + \frac{\sqrt{2}}{4} ) \to \mathbb{R}_{>0}$ which 
lower bounds the min-entropy of the first player's output as a function of the score achieved
(similar to Figure 2 of \cite{pam:2010}).
Since the proof of Theorem~\ref{thm:main} relies centrally on the commutativity of certain measurements,
the notion of approximate commutativity \cite{Ozawa:2013, CoudronV:2015} may be useful
for a robust proof.

\commentout{
Our result can be seen as related to known results on
uncertainty principles in the presence of quantum side information
(see section 2.4 in~\cite{WehnerW:2010}).  One of the distinguishing features
of our result is that the quantum side information is possessed
by a party who is active rather than passive.
}

A potentially interesting aspect of Corollary~\ref{maincor} is that it contains
a notion of \textit{certified erasure} of information.  Note that in the CHSH
example above, if Bob were asked before his turn to guess Alice's output
given her input, he could do this perfectly.  (Indeed, this would be
the case in any strategy that uses a maximally entangled state
and projective measurements.)
Contrary to this, when Bob is compelled to carry out his part of the
strategy before Alice's input is revealed, he loses the ability
to perfectly guess Alice's output.  Requiring a superclassical score from Alice and Bob
amounts to forcing Bob to erase information.  Different
variants of certified erasure are a topic of current study
\cite{Unruh:2015, Kaniewski:2016, Rib:2016}.
An interesting
research avenue is determine the minimal assumptions under
which certified erasure is possible.

We also note that the scenario in which the second player
tries to guess the first player's output after computing his own
output fits the general framework of \textit{sequential
nonlocal correlations} \cite{Gallego:2014}.
In \cite{Curchod:2015} such correlations
are used for ordinary (non-blind) randomness expansion.
Another interesting avenue is to explore
how our techniques could be applied to more general
sequential nonlocal games.

\paragraph{Acknowledgments.} 
The first author thanks Jedrzej Kaniewski, Marcin Pawlowski and Stefano Pironio  for
helpful information. This research was supported in part by US NSF
Awards 1500095, 1216729, 1526928, and 1318070.

\bibliographystyle{apsrev}
\bibliography{../quantumsec}
\end{document}